\tikzstyle{internal} = [draw, fill, shape=circle]
\tikzstyle{external} = [shape=circle]
\tikzstyle{square}   = [draw, fill, rectangle]
\tikzstyle{triangle} = [draw, fill, regular polygon, regular polygon sides=3, inner sep=3pt]
\tikzstyle{pentagon} = [draw, fill, regular polygon, regular polygon sides=5, inner sep=2pt, minimum size=14pt]
\tikzset{every fit/.append style=text badly centered}
\tikzset{>=latex} 
\newcommand{\numP}{\#{\bf P}}
\newcommand{\vbl}{{\sf var}}
\newcommand{\prs}{partial rejection sampling} 
\newcommand{\Bad}{{\sf Bad}}
\newcommand{\Ex}{\mathop{\mathbb{{}E}}\nolimits}
\newcommand{\Var}{\mathop{\mathrm{Var}}\nolimits}
\def\*#1{\mathbf{#1}}
\def\+#1{\mathcal{#1}}
\def\-#1{\mathrm{#1}}
\newcommand{\abs}[1]{\left\vert#1\right\vert}
\newcommand{\eps}{\varepsilon}
\renewcommand{\Pr}{\mathop{\mathbb{{}P}}\nolimits}
\newcommand\Xbar{\overline X}
\newcommand{\fix}{\mathrm{fix}}
\newcommand{\cycle}{\mathrm{cycle}}
\newcommand{\OmegavblE}{\Omega_E^{\vbl}}
\newcommand{\OmegavblC}{\Omega_{\cycle}^{\vbl}}
\newcommand{\sigmafix}{\sigma_{\fix}}
\newcommand{\maxx}{\mathrm{max}}
\newcommand{\minn}{\mathrm{min}}
\newtheorem{theorem}{Theorem}
\newtheorem{lemma}[theorem]{Lemma}
\newtheorem{proposition}[theorem]{Proposition}
\newtheorem{corollary}[theorem]{Corollary}
\crefname{theorem}{Theorem}{Theorems}
\crefname{observation}{Observation}{Observations}
\crefname{claim}{Claim}{Claims}
\crefname{condition}{Condition}{Conditions}
\crefname{algorithm}{Algorithm}{Algorithms}
\crefname{property}{Property}{Properties}
\crefname{example}{Example}{Examples}
\crefname{fact}{Fact}{Facts}
\crefname{lemma}{Lemma}{Lemmas}
\crefname{corollary}{Corollary}{Corollaries}
\crefname{definition}{Definition}{Definitions}
\crefname{remark}{Remark}{Remarks}
\crefname{proposition}{Proposition}{Propositions}
\crefname{equation}{equation}{equations}
\def\prob#1#2#3{\goodbreak\begin{list}{}{\labelwidth\z@ \itemindent-\leftmargin
                        \itemsep\z@  \topsep6\p@\@plus6\p@
                        \let\makelabel\descriptionlabel}
                      \item[\textbf{Name}]#1
                      \item[\textbf{Instance}]#2
                      \item[\textbf{Output}]#3
                \end{list}}
\title{Approximately counting bases of bicircular matroids}
\title{Approximately counting bases of bicircular matroids}\thanks{The work described here was supported by the EPSRC research grant
EP/N004221/1 ``Algorithms that Count''.}
\author[H.\ Guo]{Heng Guo}
\address[Heng Guo]{School of Informatics, University of Edinburgh, Informatics Forum, Edinburgh, EH8 9AB, United Kingdom.}
\email{hguo@inf.ed.ac.uk}
\author[M.\ Jerrum]{Mark Jerrum}
\address[Mark Jerrum]{School of Mathematical Sciences,
Queen Mary, University of London, Mile End Road, London, E1 4NS, United Kingdom.}
\email{m.jerrum@qmul.ac.uk}
\begin{document}

\label{firstpage}
\maketitle

\begin{abstract}
  We give a fully polynomial-time randomised approximation scheme (FPRAS) for the number of bases in bicircular matroids.
  This is a natural class of matroids for which counting bases exactly is \numP-hard and yet approximate counting can be done efficiently.
\end{abstract}

\section{Introduction}

We introduce a new application of the ``popping'' paradigm 
that has been used to design efficient perfect samplers for a number of combinatorial structures.
Existing examples are cycle popping~\cite{Wilson96,PW98}, sink popping~\cite{CPP02} and cluster popping~\cite{GP14,GJ19,GH18} that, 
respectively, produce uniformly distributed spanning trees, sink-free orientations in undirected graphs, 
and root-connected subgraphs in directed
graphs (and, as a consequence, connected subgraphs of an undirected graph).  
In doing so we provide an example of a natural class of matroids for which the bases-counting problem is
hard ($\numP$-complete) to solve exactly, but which is polynomial time to solve approximately in the sense
of Fully Polynomial-time Randomised Approximation Schemes (or FPRAS).  For basic definitions connected with the 
complexity of counting problems refer to~\cite{MotRag} or~\cite{ETHmono}.

Towards this end, we introduce ``bicycle popping'' as
a means to sample, uniformly at random, bases of a bicircular matroid.\footnote{``Bicycle popping'' has the
advantage of being easy to remember, but it is important to note that the term is unconnected with the 
concept of bicycle space of a graph.}  
Bicircular matroids are associated with undirected graphs and will be defined in the next section.  
Note that the main result and its proof can be understood in graph-theoretic terms, and
no knowledge of matroid theory is needed beyond the exchange axiom.  Our perfect sampling approach can
be implemented to run in $O(n^2)$ time, where $n$ is the number of vertices 
in the instance graph. (Refer to Section~\ref{sec:LERW}.) Using a standard reduction, such a sampler can be used to construct an efficient randomised 
algorithm, indeed an FPRAS, for estimating the number of bases within specified relative error (Theorem~\ref{thm:FPRAS} and \Cref{thm:faster-counting}).  

The computational complexity of counting bases of a matroid {\it exactly\/} is still only partially understood.  
According to the class of matroids under consideration, the exact counting problem may be polynomial time,
$\numP$-complete or unresolved.  Counting bases of a graphic matroid (i.e., counting spanning trees of a graph) 
is a classical problem and is well solved by Kirchhoff's matrix-tree theorem.  This method extends 
fairly directly to the wider class of regular matroids~\cite{Maurer76}.  The bases-counting problem for bicircular 
matroids, a restriction of the class of transversal matroids, was shown to be $\numP$-complete by 
Gim\'enez and Noy~\cite{GN06}.  The status of the important case of binary matroids appears to 
be open~\cite{Sno12}. 

Jerrum~\cite{Jer06} showed that it is $\numP$-hard to exactly count bases of certain sparse paving matroids.   
Combined with the approximation algorithm of Ch\'avez Lomel\'\i\ and Welsh \cite{CW96}, 
this result highlights a (presumably) exponential gap between exact and approximate counting. 
However, it could be said that this example is not particularly natural.  Piff and Welsh~\cite{PiffWelsh71}
demonstrated that the number of paving matroids on a ground set of $n$~elements is doubly exponential 
in~$n$, so even representing the problem instance raises significant issues.
Combined with the completeness result of Gim\'enez and Noy \cite{GN06}, our FPRAS provides a more convincing 
and natural demonstration of the gap between exact and approximate counting for matroid bases. 

After posting our paper on arXiv, 
we were made aware of an independent work of Kassel and Kenyon \cite{KK17},
who have proposed essentially the same algorithm\footnote{There are a couple of fine differences, such as not rejecting $2$-cycle, and popping cycles randomly instead of deterministically. These differences do not change the nature of the algorithm.} 
for sampling from a weighted distribution on cycle-rooted spanning trees.
Their interest in the algorithm is as a component in their proofs, 
for which correctness of the algorithm is obviously 
important and is proved in detail.  The time-complexity of their algorithm is 
not analysed in detail, though Kassel and Kenyon offer some brief remarks about the 
run-time of the algorithm on a square grid.  Kassel~\cite{Kassel15} also observes the connection to sampling bases 
of a bicircular matroid, and notes that the corresponding counting problem is \#P-complete.

Even more recently, Anari, Liu, Oveis~Gharan and Vinzant~\cite{ALOV19} 
have shown that the expansion of the so-called ``basis exchange graph'' for any matroid is at least~$1$.  
This result implies that a random walk on the bases exchange graph is rapidly mixing, and provides a Markov Chain Monte Carlo (MCMC) approach to sampling bases of any matroid.  
The mixing time is subsequently sharpened by Cryan, Guo and Mousa~\cite{CGM19}.
The only requirement for the Markov chain approach is that there exists an efficient independence oracle to verify whether a given set is a basis. 
Since this requirement certainly holds for bicircular matroids, these works yield an alternative
approach to sampling bases of a bicircular matroid.  The Markov chain method is very different to ours and does not give a perfect sampler
(though the deviation of the output distribution from uniformity decays exponentially fast in the run-time).
Also, the analysis of the expansion factor of the basis-exchange graph is technically challenging, while the analysis of our popping algorithm is relatively elementary.   
Before the work of Anari, Liu, Oveis~Gharan and Vinzant~\cite{ALOV19}, the bases-exchange graph was known to be an expander only in special cases.
Most notably, Feder and Mihail~\cite{FM92} showed that the class of so-called ``balanced matroids'',
a strict superset of the class of regular matroids, has expansion factor at least~$1$.
(See also \cite{JSTV04} for improvements and simplifications.) 
Furthermore, all paving matroids admit an FPRAS for the number of their bases, 
as shown by Ch\'avez Lomel\'\i\ and Welsh \cite{CW96}, through the straightforward Monte-Carlo method.



\section{Bicycle-popping}

For a graph $G=(V,E)$, let $n=\abs{V}$ and $m=\abs{E}$.
When $m\ge n$ and $G$ is connected, we associate a \emph{bicircular} matroid $B(G)$ with $G$.
The ground set is $E$, and a subset $R\subseteq E$ is independent if every connected component of $(V,R)$ has at most one cycle.
Thus, the set of bases of $B(G)$ is
\begin{align*}
  \+B=\{R\mid\text{every connected component of $(V,R)$ is unicyclic}\}.
\end{align*}
In particular, if $R\in\+B$, then $\abs{R}=n$.
Denote by $\pi_{\+B}(\cdot)$ or simply $\pi(\cdot)$ the uniform distribution over $\+B$.
We refer the reader to \cite{Mat77} for more details on bicircular matroids.
Gim\'enez and Noy~\cite{GN06} have shown that counting the number of bases for bicircular matroids is \numP-complete.
See also \cite{GMN05} for extremal bounds on this number.

We now associate a random arc $a_v=(v,w)$ to each vertex $v\in V$,
which is uniform over all neighbours $w$ of~$v$.
Given an arbitrary assignment $\sigma=(a_v)_{v\in V}$, consider the directed graph $(V,\sigma)$ 
with exactly those $|V|$ arcs.  It is easy to see that 
each (weakly) connected component of this graph has the same number of arcs as vertices.
Thus, there is exactly one (directed) cycle per connected component.  Let $U(\sigma)\subseteq E$
be the subset of edges of~$G$ obtained by dropping the direction of arcs in~$\sigma$.
Consider the distribution $\tau(\cdot)$ on subsets of $E$ induced by $\sigma$ via the mapping $U(\sigma)$.
There are two reasons why $\tau(\cdot)$ is not quite the same as $\pi(\cdot)$.
\begin{enumerate}
  \item It is possible to have $2$-cycles in $\sigma$, in which case at least one connected component of $U(\sigma)$ will be a tree rather than a 
  unicyclic graph.
  \item Every cycle in $\sigma$ of length greater than 2 may be reversed without changing $U(\sigma)$.  
  Thus, in $\tau(\cdot)$, each subgraph with $k$ connected components arises in $2^k$ ways, skewing the distribution towards 
  configurations with more connected components.  
\end{enumerate}
For each edge $e\in E$, let $B_e$ denote the event that a $2$-cycle is present at $e$, i.e., both orientations of $e$ appear in~$\sigma$.
For each cycle~$C$ in~$G$, we fix an arbitrary orientation and denote by $B_C$ the event that $C$ is oriented this way in~$\sigma$.
If we further condition on none of $B_e$ or $B_C$ happening, the resulting distribution~$\tau$ induced by $U(\sigma)$ is exactly $\pi(\cdot)$.

Partial rejection sampling \cite{GJL19} provides a useful framework 
to sample from a product distribution conditioned on a number of bad events not happening.
In particular, we call a collection of bad events \emph{extremal} 
if any two bad events are either probabilistically independent or disjoint (i.e., cannot both occur).
It is straightforward to verify that the collection of bad events $\{B_e\mid e\in E\}\cup\{B_C\mid \text{$C$ is a cycle in $G$}\}$ is extremal. 
(The reason is similar to the cycle-popping algorithm. See \cite[Section 4.2]{GJL19}.
In fact, the bad events here are either identical or more restrictive than those for cycle-popping.)
For an extremal instance, to draw from the desired distribution,
we only need to randomly initialize all variables,
and then repeatedly re-randomize variables responsible for occurring bad events.
This is \Cref{alg:clock-popping}, which we call ``bicycle-popping''.

\begin{algorithm}
  \caption{Bicycle-popping}
  \label{alg:clock-popping}
  Let $S$ be a subset of arcs obtained by assigning each arc $a_v$ independently and uniformly among all neighbours of $v$\;
  \While{a bad event $B_e$ or $B_C$ is present}{
    Let $\Bad$ be the set of vertices that are contained in any edge $e$ or cycle $C$ such that $B_e$ or $B_C$ occurs\;
    Re-randomize $\{a_v\mid v\in \Bad\}$ to get a new $S$\;
  }
  \KwRet{the undirected version of $S$}
\end{algorithm}

We need to be a little bit careful about bad events $(B_C)$,
since there are potentially exponentially many cycles in $G$.
We cannot afford to dictate the unfavourable orientation a priori,
but rather need to figure it out as the algorithm executes.
This is not difficult to get around, since we only need an arbitrary (but deterministic) orientation of each cycle.
For example, we may arbitrarily order all vertices, and give a sign $\pm$ to each direction of an edge according to the ordering.
The sign of an odd-length cycle is the product over all its edges,
and the sign of an even-length cycle is the product over all but the least indexed edge.
Then, we can simply declare all orientations with a $+$ sign ``bad''.
An alternative is to reject cycles randomly, which is considered in \cite{KK17} and is described in \Cref{sec:Gibbs}.

Since the extremal condition is satisfied, applying \cite[Theorem 8]{GJL19} we get the correctness of \Cref{alg:clock-popping}.

\begin{proposition}\label{prop:correctness}
  Conditioned on terminating, the output of \Cref{alg:clock-popping} is exactly $\pi(\cdot)$.
\end{proposition}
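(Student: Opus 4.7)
The plan is to invoke \cite[Theorem 8]{GJL19}: for an extremal instance of partial rejection sampling, the output (conditional on termination) has the product-distribution law conditioned on all bad events failing. Two things need to be verified. First, that the collection of bad events $\+F = \{B_e : e \in E\} \cup \{B_C : C\text{ a cycle in } G\}$ is extremal with respect to the uniform product distribution on the arcs $(a_v)_{v\in V}$. Second, that this product distribution conditioned on $\bigcap_{B\in\+F}\overline{B}$ coincides with $\pi(\cdot)$.

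For extremality, I would use that each $B_e$ depends only on the arcs at the two endpoints of $e$ and each $B_C$ only on the arcs at the vertices of $C$: two bad events whose vertex supports are disjoint are therefore independent under the product law, so it suffices to show that any two distinct bad events sharing at least one vertex are disjoint. At any shared vertex $v$, the event $B_e$ forces $a_v$ to point along $e$, while $B_C$ forces $a_v$ to follow the pre-fixed orientation of $C$ at $v$. I would then work through the three cases $B_e/B_{e'}$, $B_e/B_C$, and $B_C/B_{C'}$, checking in each that distinct events impose incompatible values of $a_v$ at some shared vertex. The delicate case is $B_C/B_{C'}$: if both events occurred, then every vertex shared by $C$ and $C'$ would lie on two distinct directed cycles in the functional graph defined by $\sigma$, which is impossible because each vertex has a unique outgoing arc.

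For the conditional distribution, I would argue that the map $U$, restricted to assignments in which no bad event occurs, is a uniform $1$-to-$2^k$ cover of $\+B$. Any $\sigma$ avoiding all $B_e$ has no $2$-cycle, so $(V,\sigma)$ is a disjoint union of unicyclic components and $U(\sigma) \in \+B$. Conversely, for $R \in \+B$ with $k$ components, the preimage $U^{-1}(R)$ has exactly $2^k$ elements: each tree edge is forced to orient toward its cycle by the out-degree-one constraint, and each of the $k$ cycles independently admits two orientations. Exactly one of these $2^k$ assignments avoids every $B_C$, since the pre-fixed orientation of each cycle has been declared bad. Because the product distribution assigns equal mass $\prod_v 1/\deg(v)$ to every $\sigma$, conditioning on $\bigcap_{B\in\+F}\overline{B}$ gives the uniform distribution on $\+B$, namely $\pi(\cdot)$.

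The main obstacle is the extremality case analysis, particularly the $B_C/B_{C'}$ case when the two cycles share more than one vertex, or a whole path; the argument is elementary but is the only place a slip is possible. Once both conditions are verified, \cite[Theorem 8]{GJL19} delivers \Cref{prop:correctness} immediately.
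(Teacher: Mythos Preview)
Your proposal is correct and follows exactly the paper's approach: verify that the bad events form an extremal instance and then invoke \cite[Theorem 8]{GJL19}. The paper dispatches both the extremality check and the identification of the conditional law with $\pi(\cdot)$ in a sentence each (deferring to \cite[Section 4.2]{GJL19} for the former), whereas you spell out the case analysis and the $1$-to-$2^k$ cover argument explicitly; the content is the same.
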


We remark that bicircular popping, \Cref{alg:clock-popping}, 
differs from cycle-popping \cite{PW98} by associating random variables to \emph{all} vertices,
and differs from cluster-popping \cite{GP14,GJ19} by associating random variables to vertices rather than edges.

\section{Run-time analysis}
\label{sec:run-time}

An advantage of adopting the \prs\ framework is that 
we have a closed form formula for the expected run-time of these algorithms on extremal instances.

In the general setting of partial rejection sampling,
the target distribution to be sampled from is a product distribution over variables, 
conditioned on a set of ``bad'' events $(A_i)_{i\in \+I}$ not happening for some index set $\+I$.
Let $T_i$ be the number of resamplings of event $A_i$.
Let $q_i$ be the probability such that exactly $A_i$ occurs,
and $q_{\emptyset}$ be the probability such that none of $(A_i)_{i\in \+I}$ occurs,
both under the product distribution.
Suppose $q_{\emptyset}>0$ as otherwise the support of $\pi(\cdot)$ is empty.
For extremal instances, \cite[Lemma 12]{GJL19} and the first part of the proof of \cite[Theorem 13]{GJL19} yield
\begin{align}\label{eqn:expected-i}
  \Ex T_i=\frac{q_i}{q_{\emptyset}}.
\end{align}
Let $T$ be the number of resampled variables.
By linearity of expectation and \eqref{eqn:expected-i},
\begin{align}\label{eqn:expected}
  \Ex T=\sum_{i\in \+I} \frac{q_i\cdot\abs{\vbl(A_i)}}{q_{\emptyset}}.
\end{align}
(See also \cite[Eqn.~(2)]{GH18}.)
We note that an upper bound similar to the right hand side of \eqref{eqn:expected} was first shown by Kolipaka and Szegedy \cite{KS11},
in a much more general setting but counting only the number of resampled events.

Specialising to \Cref{alg:clock-popping}, 
let $q_e$ and $q_C$ be the corresponding quantity for bad events $B_e$ and $B_C$, respectively.
Let $\Omega_0$ be the set of assignments so that no bad event happens,
and $\Omega_e$ (or $\Omega_C$) be the set of assignments of $(a_v)_{v\in V}$ so that exactly $B_e$ (or $B_C$) happens
and none of the other bad events happen.
Then $\abs{\Omega_0}=\abs{\+B}$.
For a bad event $B$, let $\vbl(B)$ be the set of variables defining $B$,
namely, $\vbl(B_e)=\{a_u,a_v\}$ if $e=(u,v)\in E$,
and $\vbl(B_C)=\{a_v\mid v\in C\}$ if $C$ is a cycle in $G$.
Define 
\begin{align*}
  \OmegavblE:=\{\left( \sigma,a_v \right)\mid \exists e\in E,\;\sigma\in\Omega_e,\;a_v\in\vbl(B_e)\},
\end{align*}
and 
\begin{align*}
  \OmegavblC:=\{\left( \sigma,a_v \right)\mid \exists\text{a cycle $C$},\;\sigma\in \Omega_C,\;a_v\in\vbl(B_C)\}.
\end{align*}
Then $\sum_{e\in E} \frac{q_e\cdot\abs{\vbl(A_i)}}{q_{\emptyset}}=\frac{\big|\OmegavblE\big|}{\abs{\Omega_0}}$
and $\sum_{C\text{ is a cycle}} \frac{q_C\cdot\abs{\vbl(A_i)}}{q_{\emptyset}}=\frac{\big|\OmegavblC\big|}{\abs{\Omega_0}}$. 

\begin{proposition}  \label{prop:extremal:runningtime}
  Let $T$ be the number of resampled variables of \Cref{alg:clock-popping}.
  Then 
  \begin{align*}
    \Ex T = \frac{\big|\OmegavblE\big|}{\abs{\Omega_0}} + \frac{\big|\OmegavblC\big|}{\abs{\Omega_0}}.
  \end{align*}
\end{proposition}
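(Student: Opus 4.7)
The plan is to apply equation~\eqref{eqn:expected} directly and then repackage the two resulting sums as the cardinalities $|\OmegavblE|$ and $|\OmegavblC|$ via a straightforward counting argument. Writing $\+I=\{B_e:e\in E\}\cup\{B_C:C\text{ is a cycle}\}$ and splitting the sum in \eqref{eqn:expected} according to which kind of bad event is indexed,
\[
  \Ex T
  \;=\;\sum_{e\in E}\frac{q_e\cdot|\vbl(B_e)|}{q_{\emptyset}}
  \;+\;\sum_{C}\frac{q_C\cdot|\vbl(B_C)|}{q_{\emptyset}}.
\]
So the whole proof reduces to checking the two identities stated in the paragraph preceding the proposition.

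For the first identity, I would observe that the underlying product distribution on $\sigma=(a_v)_{v\in V}$ is uniform on $\prod_v N(v)$, so for every event $A$ on $\sigma$ the probability is proportional to the number of assignments realising $A$. In particular $q_e/q_\emptyset=|\Omega_e|/|\Omega_0|$, and hence
\[
  \sum_{e\in E}\frac{q_e\cdot|\vbl(B_e)|}{q_\emptyset}
  \;=\;\frac{1}{|\Omega_0|}\sum_{e\in E}|\Omega_e|\cdot|\vbl(B_e)|.
\]
Now I would count $|\OmegavblE|$ by summing over $e$: for each $e\in E$ and each $\sigma\in\Omega_e$, the pair $(\sigma,a_v)$ lies in $\OmegavblE$ for exactly the $|\vbl(B_e)|$ choices $a_v\in\vbl(B_e)$. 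The sets $\{\Omega_e\}_{e\in E}$ and $\{\Omega_C\}_C$ are pairwise disjoint, because membership in $\Omega_e$ requires that \emph{exactly} $B_e$ occur and no other bad event. Consequently the pair $(\sigma,a_v)$ cannot be counted for two different edges (or a mixed edge/cycle pair) in the definition of $\OmegavblE$, and the bookkeeping collapses to $|\OmegavblE|=\sum_{e}|\Omega_e|\cdot|\vbl(B_e)|$. This gives the first desired identity. The argument for the cycle sum is verbatim the same, with $\Omega_C$, $\vbl(B_C)$ and $\OmegavblC$ in place of their edge counterparts.

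The only delicate point is verifying the disjointness invoked above, but this is precisely the extremal condition already established in the text: under an assignment $\sigma\in\Omega_e$ or $\sigma\in\Omega_C$ there is a single bad event occurring by definition, so no double-counting arises. Combining the two identities with the split form of \eqref{eqn:expected} yields the claim.
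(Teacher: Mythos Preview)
Your proposal is correct and mirrors the paper's own reasoning: the paper derives the proposition in the paragraph immediately preceding it by invoking \eqref{eqn:expected}, splitting the sum over the two types of bad events, and asserting the identities $\sum_{e}\frac{q_e|\vbl(B_e)|}{q_\emptyset}=\frac{|\OmegavblE|}{|\Omega_0|}$ and $\sum_{C}\frac{q_C|\vbl(B_C)|}{q_\emptyset}=\frac{|\OmegavblC|}{|\Omega_0|}$ without further comment. You have simply written out the counting justification for those two identities (uniformity of the product measure and disjointness of the $\Omega_e$'s and $\Omega_C$'s by the ``exactly one bad event'' clause), which the paper leaves implicit.
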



We bound these ratios using a combinatorial encoding idea.
Namely, we want to design an injective mapping from $\OmegavblE$ or $\OmegavblC$ to $\Omega_0$.
To make the mapping injective, we in fact have to record some extra information.
We first deal with $\OmegavblC$.

\begin{lemma}\label{lem:magic-ratio-cycle}
  For a connected graph $G=(V,E)$ with $m\ge n$ where $m=\abs{E}$ and $n=\abs{V}$,
  $\big|\OmegavblC\big| \le n \abs{\Omega_0}$.
\end{lemma}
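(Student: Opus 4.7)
The plan is to build an injection $\Phi : \OmegavblC \to \Omega_0 \times V$, which immediately yields $|\OmegavblC| \le n\,|\Omega_0|$. Given $(\sigma,a_v)\in\OmegavblC$, there is a unique cycle $C$ of $G$ with $\sigma\in\Omega_C$, and $v$ is a vertex of $C$ since $\vbl(B_C)=\{a_u\mid u\in C\}$. In $\sigma$, the arcs at the vertices of $C$ form a directed cycle traversing $C$ in its designated ``bad'' orientation. I define $\sigma'$ to be $\sigma$ with each arc $a_u$ for $u\in C$ replaced by the arc that points to the preceding vertex of $C$ (i.e.\ the directed cycle around $C$ is reversed), and set $\Phi(\sigma,a_v):=(\sigma',v)$.

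The key verification is that $\sigma'\in\Omega_0$. Since $\sigma$ is a functional graph (each vertex has out-degree one), each weakly connected component contains exactly one directed cycle, and only arcs inside $C$'s component are altered by the reversal. Consequently, the cycles of all other components are preserved, and within $C$'s component the unique directed cycle flips from $C$ to its reverse $\overline{C}$, which is now in the ``good'' orientation. No event $B_{C'}$ with $C'\ne C$ occurred in $\sigma$, so no such event occurs in $\sigma'$ either. Moreover, no $2$-cycle can be introduced: only arcs based at vertices of $C$ are changed, and those new arcs still lie inside $C$, so a new $2$-cycle would have to sit on an edge of $C$; but the new arcs merely traverse $C$ in the opposite direction, which cannot produce two opposing arcs on the same edge once $|C|\ge 3$. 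Hence $\sigma'\in\Omega_0$.

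For injectivity, the operation is visibly reversible. Given $(\sigma',v)$ in the image, $v$ lies on the unique directed cycle of its component in $\sigma'$ (the vertex set of each component is unchanged by the reversal), and reversing that cycle recovers $\sigma$ uniquely, while $v$ already determines the second coordinate. The main obstacle in this argument is essentially the bookkeeping underlying the claim $\sigma'\in\Omega_0$: one must simultaneously check that $|C|\ge 3$ forbids creating a new $2$-cycle, that the functional-graph structure confines all modifications to a single component, and that the reversed cycle $\overline{C}$ is in the good orientation. Once these facts are in hand, the bound $|\OmegavblC|\le n|\Omega_0|$ follows immediately.
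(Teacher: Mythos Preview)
Your proof is correct and follows essentially the same approach as the paper: define the map $\Phi(\sigma,a_v)=(\sigma',v)$ by reversing the orientation of the unique bad cycle~$C$, and invert by reversing the directed cycle through~$v$ in~$\sigma'$. Your write-up is in fact more careful than the paper's, which simply asserts ``Clearly $\sigmafix\in\Omega_0$'' where you spell out why no new $B_{C'}$ or $B_e$ can be triggered.
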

\begin{proof}
  We define a ``repairing'' mapping $\varphi:\OmegavblC\rightarrow\Omega_0\times V$, as follows.
  For $\sigma\in\Omega_C$,
  we define $\sigmafix$ to be the same as $\sigma$ except that the orientation of $C$ is reversed.
  Clearly $\sigmafix\in\Omega_0$.
  Let 
  \begin{align*}
    \varphi(\sigma,a_v) = (\sigmafix,v) \quad \text{if $\sigma\in\Omega_C$ and $v\in C$}.
  \end{align*}
  We claim that $\varphi$ is injective.
  To see this, given $\sigmafix$ and $v$, we simply flip the orientations of the cycle containing $v$ to recover $\sigma$.
  Since $\varphi$ is injective, we have that $\big|\OmegavblC\big| \le n \abs{\Omega_0}$.
\end{proof}

For $\OmegavblE$, the proof is slightly more involved.
For $\sigma\in\Omega_e$, if we contract $e$, this component is a directed tree rooted at $e$,
where all edges are directed toward $e$.

\begin{lemma}\label{lem:magic-ratio-edge}
  For a connected graph $G=(V,E)$ with $m\ge n$ where $m=\abs{E}$ and $n=\abs{V}$,
  $\big|\OmegavblE\big| \le 2n(n-1) \abs{\Omega_0}$.
\end{lemma}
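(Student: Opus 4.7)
The plan is to parallel the injective-mapping strategy of \Cref{lem:magic-ratio-cycle} by constructing a ``repair'' map $\varphi \colon \OmegavblE \to \Omega_0 \times V \times V \times \{0,1\}$ whose image lies in the subset where the two $V$-coordinates are distinct; such a restricted codomain has cardinality $n(n-1)\cdot 2\cdot|\Omega_0| = 2n(n-1)\,|\Omega_0|$, which yields the bound once injectivity is established.

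Given $(\sigma,a_v)\in\OmegavblE$ with bad edge $e=(u,x)$ and $v\in\{u,x\}$, let $v'$ denote the other endpoint. Using the structure noted just before the lemma, contracting $e$ inside the $\sigma$-component $C_e$ produces a directed tree whose arcs all point toward the contracted root, so that $C_e\setminus\{u,x\}$ splits into subtrees $T_u,T_x$ hanging off $u$ and $x$ with arcs pointing inward. I would define $\sigmafix$ by redirecting one arc of $\sigma$ to a canonical target vertex $y$, according to the first of the following sub-cases that applies: (i) $v$ has a $G$-neighbour $y\in T_{v'}$, in which case setting $a_v:=y$ produces a new cycle $v\to y\to\cdots\to v'\to v$ winding through the $\sigma$-arcs of the $T_{v'}$-branch containing $y$; (ii) $v'$ has a $G$-neighbour $y\in T_v$, in which case we symmetrically set $a_{v'}:=y$; (iii) otherwise, use a $G$-neighbour $y$ of $v$ outside $C_e$ so that $C_e$ is merged into $y$'s already cyclic component, or, if no such neighbour exists, install a $G$-cycle that already lies inside $C_e$ by reversing a tree-path and closing it with the appropriate $G$-edge. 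The existence of at least one option follows from $m\ge n$ combined with connectivity of $G$: if $C_e\ne V$ there is a boundary edge, and if $C_e=V$ the graph on $n$ vertices with at least $n$ edges must contain a cycle distinct from the 2-cycle at $e$. If the resulting new cycle is in the bad orientation, reverse it so that $\sigmafix\in\Omega_0$ and record the flag $\beta:=1$; otherwise $\beta:=0$. Finally set $\varphi(\sigma,a_v):=(\sigmafix,v,y,\beta)$. For injectivity, given $(\sigmafix,v,y,\beta)$ I would first undo the cycle reversal using $\beta$; then inspection of the arcs near $v$ in the intermediate configuration reveals which sub-case (i)--(iii) occurred and recovers $v'$ as either the cycle-predecessor of $v$, the cycle-successor $a_v$, or the endpoint dictated by the reversed tree-path. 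Reverting the redirection restores $\sigma$, while $v$ identifies the chosen variable.

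The main obstacle is sub-case (iii), both because one must invoke $m\ge n$ together with connectivity to guarantee the existence of a usable $G$-edge or an intrinsic $G$-cycle, and because in the latter sub-sub-case the repair is not a single-arc change but a coordinated multi-arc reversal; verifying that the triple $(v,y,\beta)$ still encodes enough information to invert the operation unambiguously requires careful case analysis. A secondary verification, needed uniformly across all sub-cases, is that the repair creates no new 2-cycle --- which follows from $\sigma\in\Omega_e$ forbidding all 2-cycles except at $e$ --- and that the single orientation bit $\beta$ suffices to handle the ``bad orientation'' correction for whichever cycle the repair produces.
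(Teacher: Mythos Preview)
Your injectivity argument breaks in sub-case~(iii).  In the ``merge'' sub-sub-case you redirect $a_v$ to a vertex $y$ outside $C_e$, leaving $a_{v'}=v$ unchanged, and record $(\sigmafix,v,y,\beta)$.  To invert this you must recover $v'$, but in $\sigmafix$ the vertex $v$ may have many in-neighbours: every immediate child of $v$ in $T_v$ points to $v$, as does $v'$.  Nothing in the recorded data singles out which of these in-neighbours is $v'$, so the map is not injective as stated.  Your claimed recovery rule (``the endpoint dictated by the reversed tree-path'') does not apply here because no path was reversed.  The path-reversal sub-sub-case is in worse shape: you have not specified what $y$ means once several arcs are changed, nor which chord or which path is chosen, so injectivity cannot even be checked.

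The paper sidesteps all of this by recording the bad edge $e$ itself rather than the target~$y$, and by using a single uniform repair instead of a case split.  The key observation is that after the repair the arc $v_2\to v_1$ is untouched, so $e\in U(\sigmafix)$; since $|U(\sigmafix)|=n$ and $e\neq(u,u')$, there are at most $n-1$ choices for $e$ once $\sigmafix$ and $u$ are fixed.  Concretely: pick the first edge $(u,u')$ that either leaves the bad component (if $S\neq V$) or is missing from $U(\sigma)$ (if $S=V$), and reverse the unique $u$--$v_1$ path in $U(\sigma)$, prepending the arc $u\to u'$.  This single operation handles both of your sub-cases~(iii), makes the orientation bit $\beta$ unnecessary (in the $S=V$ case one simply names the endpoints of the extra edge so that $u\to u'$ is the good orientation), and makes inversion trivial because $e$ is recorded explicitly.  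If you want to salvage your scheme with minimal change, record $(v,v')$ rather than $(v,y)$; then $y$ is recoverable from $\sigmafix$ in every sub-case, and distinguishing the sub-cases becomes straightforward.
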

\begin{proof}
  Let $\Omega_E:=\bigcup_{e\in E}\Omega_e$.
  Then $\big|\OmegavblE\big| = 2\abs{\Omega_E}$.

  Fix an arbitrary ordering of all vertices and edges.
  Our goal to define an injective ``repairing'' mapping $\varphi:\Omega_E\rightarrow\Omega_0\times V\times E$.
  For $\sigma\in\Omega_e$,
  find the connected component of $U(\sigma)$ containing the edge $e=(v_1,v_2)$, and let its vertex set be~$S$. 
  Depending on whether $S=V$, there are two cases.
  \begin{enumerate}
    \item If $S\neq V$, then, since the graph~$G$ is connected,
      there must be at least one edge joining the component to the rest of the graph.
      Pick the first such edge $(u,u')$ where $u$ is in $S$ and $u'$ is not.
    \item Otherwise $S=V$; then since the graph has at least $n$ edges,
      there must be at least one edge not in $U(\sigma)$.
      Let $e'=(u,u')$ be the first such edge, and $C$ be the cycle resulting from adding $e'$ to $U(\sigma)$.
      Suppose the correct orientation on $C$ induces the orientation $u\rightarrow u'$ on~$e'$. 
  \end{enumerate}
  Let $u=u_1,u_2,\dots,u_{\ell}=v_1$ be the unique path between $u$ and $v_1$ in $U(\sigma)$.
  (The vertex $v_1$ is chosen arbitrarily from the two endpoints of $e$.)
  Let $\sigmafix$ be the assignment so that $a_{u_i}$ points to $u_{i-1}$, where $u_0=u'$, and all other 
  variables are unchanged from~$\sigma$.
  It is easy to verify that $\sigmafix\in\Omega_0$.
  Also, $\sigmafix$ does not depend on the choice of $v_1$ from the edge $e$.
  Define $\varphi(\sigma)=(\sigmafix,u,e)$ where $e=(v_1,v_2)$.

  We claim that $\varphi$ is injective.
  We just need to recover $\sigma$ given $(\sigmafix,u,e)$.
  We first figure out whether $S=V$.
  Notice that $u'$ can be recovered as $u\rightarrow u'$ is in $\sigmafix$.
  If $S\neq V$, then the edge $(u,u')$ is a bridge under $\sigmafix$;
  whereas if $S=V$, $(u,u')$ is not.

  In the first case, simply find the path between $u$ and $v_1$,
  and reverse the ``repairing'' to yield the original $\sigma$.
  In the second case, we remove $(u,u')$ first,
  and then recover the unique path between $u$ and $v_1$.
  The rest is the same as the first case.

  Note that $\abs{\sigmafix}=n$, $u\rightarrow u'\in\sigmafix$, and $e\in U(\sigmafix)$,
  but $(v_1,v_2)\neq(u,u')$.
  Thus, fixing $\sigmafix$, there are $n$ choices for $u$, and $(n-1)$ choices for $e=(v_1,v_2)$.
  Since $\varphi$ is injective, we have that $\big|\OmegavblE\big| = 2\abs{\Omega_E} \le 2n(n-1) \abs{\Omega_0}$.
\end{proof}

Combining Lemma~\ref{lem:magic-ratio-cycle}, Lemma~\ref{lem:magic-ratio-edge}, and Proposition~\ref{prop:extremal:runningtime},
we have the following theorem.

\begin{theorem}  \label{thm:clock-popping}
  Let $G=(V,E)$ be a connected graph, $n=\abs{V}$, $m=\abs{E}$, and $m\ge n$.
  The expected number of random variables sampled in \Cref{alg:clock-popping} on $G$ is at most $2n^2-n$.
\end{theorem}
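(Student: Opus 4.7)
The plan is to observe that this theorem is essentially an immediate arithmetic consequence of the three preceding results, so no new technical work is needed: the heavy lifting was done in the injective-mapping arguments of \Cref{lem:magic-ratio-cycle} and \Cref{lem:magic-ratio-edge}.

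First, I would invoke \Cref{prop:extremal:runningtime} to write the expected number of resampled random variables as
\begin{align*}
  \Ex T \;=\; \frac{\big|\OmegavblE\big|}{\abs{\Omega_0}} \;+\; \frac{\big|\OmegavblC\big|}{\abs{\Omega_0}}.
\end{align*}
This reduces the problem to bounding the two ratios on the right-hand side.

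Next, I would bound the second term using \Cref{lem:magic-ratio-cycle}, which gives $\big|\OmegavblC\big|\le n\,|\Omega_0|$, hence $\big|\OmegavblC\big|/|\Omega_0|\le n$. Similarly, \Cref{lem:magic-ratio-edge} yields $\big|\OmegavblE\big|\le 2n(n-1)\,|\Omega_0|$, hence $\big|\OmegavblE\big|/|\Omega_0|\le 2n(n-1)$. Note that both lemmas require the hypothesis $m\ge n$ and connectedness of $G$, which are exactly the assumptions of the theorem (the hypothesis $m\ge n$ is crucial in the second case of the proof of \Cref{lem:magic-ratio-edge} for finding an extra edge outside the tree component).

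Finally, adding the two bounds I obtain $\Ex T \le 2n(n-1) + n = 2n^2 - n$, which is exactly the claimed estimate. There is no real obstacle here; the work has been absorbed into the injective ``repairing'' mappings of the two lemmas, and the theorem just packages their outputs into the running-time statement that the rest of the paper will use (e.g., in feeding the sampler into the reduction that yields the FPRAS of \Cref{thm:FPRAS}).
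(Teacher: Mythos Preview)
Your proposal is correct and matches the paper's own approach exactly: the theorem is stated as an immediate consequence of \Cref{prop:extremal:runningtime}, \Cref{lem:magic-ratio-cycle}, and \Cref{lem:magic-ratio-edge}, combined via the same arithmetic $2n(n-1)+n=2n^2-n$.
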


The bound in \Cref{thm:clock-popping} is tight.
Consider a cycle of length $n$.
Clearly $\abs{\Omega_0}=1$ and $\big|\OmegavblC\big|=n$ as there is only one cycle containing $n$ edges.
Moreover, $\abs{\Omega_e}=n-1$ for there are $n-1$ choices of the missing edge.
Thus $\big|\OmegavblE\big| = 2\abs{\Omega_E} = 2n(n-1)$ and the upper bound is achieved.

\section{An implementation based on a loop-erasing random walk}
\label{sec:LERW}

In the execution of \Cref{alg:clock-popping},
during each iteration, one needs to find all bad events,
and a naive implementation may take up to $O(n)$ time for this task,
giving another factor on top of the bound in \Cref{thm:clock-popping}.
Here we provide an implementation that has expected run-time $O(n^2)$, similar to the loop-erasing random walk of Wilson \cite{Wilson96}.
A formal description is given in \Cref{alg:BLERW}.

\begin{algorithm}
  \caption{A random walk implementation of bicycle-popping}
  \label{alg:BLERW}
  $V_u\gets V$\;
  $S\gets \emptyset$\;
  \While{$V_u\neq\emptyset$}{
    $v\gets$ an arbitrary vertex in $V_u$\;
    Start a random walk from $v$, where in each step we move uniformly at random to a neighbour of the current vertex.
    Erase any cycle $C$ having length $2$ or a wrong orientation,
    until some vertex in $V\setminus V_u$ is reached, or a good cycle $C$ is formed\;
    Remove all vertices of the walk from $V_u$\;
    Add all (undirected) edges along the walk to $S$\;
  }
  \KwRet{$S$}
\end{algorithm}

Observe that, in \Cref{alg:BLERW}, once a cycle is orientated correctly,
none of its associated arcs will be resampled ever again,
and the same holds for any arc attached to it.
We will call such arcs ``fixed''.
Starting from an arbitrary vertex~$v$, we assign a random arc from $v$ to~$u$,
and continue this for~$u$.
So far this is just the normal random walk with memory.
The difference is that whenever a cycle appears, we check whether it has length $>2$ and the correct orientation.
If not, then we erase it, and continue the random walk.
Otherwise, we keep all random arcs leading towards this cycle, and mark them as fixed.
Thus, \Cref{alg:BLERW} amounts to a loop-erasing random walk with a special erasing rule.

Once the first random walk stops with a correctly oriented cycle,
we do the same for the next vertex that has not been fixed yet.
Now the new walk has two possible terminating conditions.
Namely it is fixed if it has reached some fixed vertex, or a correctly oriented cycle of length $>2$ is formed.
This process is repeated until all vertices are fixed.

\cref{alg:clock-popping} specifies a particular
order of resampling bad events, modulo the ordering of bad events within each iteration of the while-loop. 
However, bad events can be sampled in any order, without affecting correctness or the expected number of resampled variables. 
Although the proof of this key fact has appeared in the context of specific instances of \prs, 
such as cycle-popping \cite{PW98} and sink-popping \cite{CPP02}, we are not aware that the argument has been 
presented in generality, so we do so presently.
As a consequence of this key fact, \Cref{alg:BLERW}, which is sequential, has the same resulting distribution and expected number of resampled variables as \Cref{alg:clock-popping}, which is parallel.
In particular, the expected run-time of \Cref{alg:BLERW} has the same order as the number of resampled variables,
which is at most $O(n^2)$ by \Cref{thm:clock-popping}.

The correctness of \Cref{alg:BLERW} is due to the aforementioned fact that the ordering of resamplings does not matter for extremal instances.  We now formalise and verify this fact.
Consider a generic \prs\ algorithm that repeatedly locates an occurring bad event and resamples the variables on which it depends.
A specific implementation will choose a particular order for resampling the bad events.  We can represent the choices made as a path in a countably infinite, directed ``game graph'' $\Gamma=(\Sigma,A)$.  The vertex set $\Sigma$ of $\Gamma$ contains all multisets of bad events.  We refer to these vertices as \emph{states}.
The arc set~$A$ is defined relative to a \emph{resampling table}, as used in \cite{GJL19}, following Moser and Tardos.
As the algorithm proceeds, the ``frontier'' in the table between used and fresh random variables advances;  in the notation of \cite{GJL19}, the frontier at time~$t$ is specified by the indices $(j_{i,t}:1\leq i\leq n)$.  At time~$t$, the implementation will have sampled a certain multiset $M\in\Sigma$ of bad events:  an event $B_*$ that has been resampled $k$~times will occur $k$ times in~$M$.  Note that $M$ determines the number of times each variable has been resampled, and hence the frontier of the table.  So, even though we don't know the order in which those bad events were resampled, we do know the occurring bad events at time~$t$.  For each $M\in\Sigma$ and each possible occurring bad event~$B_*$, we add an arc in $\Gamma$ from $M$ to $M'=M+B_*$.  A state with outdegree~0 is a \emph{terminating state}.  Given a fixed resampling table, an implementation of \prs\ will generate a directed path in $\Gamma$ starting at the state~$\emptyset$.  With probability~1 (over the choice of resampling table), this path will be finite, i.e., end in a terminating state.  

We now apply a Lemma of Eriksson~\cite{Eriksson96}, which is similar in spirit to Newman's Lemma, but which is both more elementary and better suited to our needs.  Observe that if two bad events occur at time~$t$ then they can be resampled in either order without altering the result;  this is a consequence of the fact that the events are on disjoint sets of variables.  In the terminology of~\cite{Eriksson96}, the game graph $\Gamma$ has the \emph{polygon property}.  
It follows from his Theorem 2.1 that $\Gamma$ has the \emph{strong convergence property}:  
if there exists a path starting at $\emptyset$ and terminating at $M$, then every path starting at $\emptyset$ will terminate at~$M$ in the same number of steps.  Since a terminating path exists with probability~1, we see that both the output and the number of resampled variables is independent of the order in which the implementation decides to resample bad events.  In other words, the correctness of \Cref{alg:BLERW} follows from that of \Cref{alg:clock-popping}, and the distribution of the number of resampled variables is identical in the two algorithms.

\section{Approximating the number of bases}

For completeness, we include a standard self-reduction to count the number of bases of a bicircular matroid,
utilising \Cref{alg:clock-popping}.

\begin{theorem}\label{thm:FPRAS}
  There is an FPRAS for counting bases of a bicircular matroid, with time complexity $O(n^3m^2\eps^{-2})$.
\end{theorem}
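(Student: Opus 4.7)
The plan is to use the standard self-reducibility reduction that turns a near-uniform sampler into an FPRAS via a telescoping product of ratios. First I would fix a sequence of subgraphs $G=G_m \supsetneq G_{m-1} \supsetneq \cdots \supsetneq G_n$, where each $G_i$ has $i$ edges and is obtained from $G_{i+1}$ by deleting a single non-bridge edge $e_{i+1}$. Such a sequence exists: as long as $i>n$, $G_i$ is a connected graph with more edges than vertices, hence contains a cycle and therefore a non-bridge. Each $G_i$ is connected with $i\geq n$ edges, so the bicircular matroid $B(G_i)$ is well defined, and $G_n$ is a connected unicyclic graph whose unique basis is its whole edge set, giving $Z(G_n)=1$.

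Next I would express $Z(G)$ as a telescoping product. For $i>n$, set
\begin{align*}
  p_i \;=\; \Pr_{R\sim\pi_{\+B(G_i)}}\!\left[e_i\notin R\right] \;=\; \frac{Z(G_{i-1})}{Z(G_i)},
\end{align*}
since bases of $B(G_i)$ avoiding $e_i$ are exactly bases of $B(G_{i-1})$. Hence $Z(G)=\prod_{i=n+1}^{m}1/p_i$, and it suffices to estimate each $p_i$ accurately.

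The key analytic input is a lower bound $p_i\ge 1/(n+1)$. I would prove this by a matroid-exchange injection: for each basis $R\in\+B(G_i)$ with $e_i\in R$, the set $R\setminus\{e_i\}$ is independent in $B(G_{i-1})$ and can be augmented by some edge $e'\in E(G_{i-1})$ to a basis $R'\in\+B(G_{i-1})$; the map $R\mapsto(R',e')$ is injective, since $R$ is recovered as $(R'\setminus\{e'\})\cup\{e_i\}$. Therefore $|\{R\in\+B(G_i):e_i\in R\}|\le n\,Z(G_{i-1})$, so $Z(G_i)\le(n+1)Z(G_{i-1})$ and $p_i\ge 1/(n+1)$. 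I expect this exchange-axiom step to be the main (though still elementary) combinatorial content of the argument.

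Finally I would apply Proposition~\ref{prop:correctness} and Theorem~\ref{thm:clock-popping} to obtain, at cost $O(n^2)$ per draw, exact samples from $\pi_{\+B(G_i)}$. Drawing $N$ such samples and taking $\hat p_i$ to be the empirical fraction with $e_i\notin R$ gives $\Var(\hat p_i)/p_i^2\le 1/(Np_i)\le(n+1)/N$. The standard product-estimator analysis (Chebyshev applied to $\prod_i \hat p_i/p_i$) shows that the combined squared relative error is at most $(m-n)(n+1)/N$, so taking $N=O(nm\eps^{-2})$ achieves relative error $\eps$ with constant probability, which can then be boosted by the median trick at only a logarithmic overhead (absorbed into the constants). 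The total cost is $(m-n)\cdot N$ samples at $O(n^2)$ each, i.e., $O(n^3m^2\eps^{-2})$, matching the bound. The only real obstacle is the lower bound on $p_i$; once that is in hand, both the telescoping and the Chebyshev variance bookkeeping are routine.
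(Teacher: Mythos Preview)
Your proposal is correct and follows essentially the same approach as the paper: telescope over a chain of subgraphs down to a single basis, bound each ratio below by $\Theta(1/n)$ via a matroid-axiom injection, and plug in the $O(n^2)$ sampler with the standard product-of-Bernoulli-means variance analysis. The only cosmetic differences are that the paper removes the edges outside a fixed basis $R_0$ (so intermediate graphs may be disconnected unions of unicyclic-or-richer components) and invokes the basis-exchange axiom with $R_0$ rather than the augmentation axiom; both yield the same $(n+1)$ bound and the same overall $O(n^3m^2\eps^{-2})$ running time.
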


\begin{proof}
Let $0<\eps<1$ be a parameter expressing the desired accuracy.
Also let $N(G)$ be the number of bases of $B(G)$, the bicircular matroid associated with $G$.

The technique for reducing approximate counting to sampling is entirely standard~\cite[Chap.~3]{ETHmono}, but we include the 
details here for completeness.
Fix any sequence of graphs $G=G_m,G_{m-1},\ldots,\allowbreak G_{n+1},G_n$, where each graph $G_{i-1}$ is obtained from the previous one $G_i$ by removing a single edge~$e_i$,
and $G_n$ is a disjoint union of unicyclic components.  (Thus the edge set of $G_n$ is a basis of $B(G)$.) Then, noting 
$N(G_n)=1$,
\begin{equation}\label{eq:telescope}
N(G)^{-1}=N(G_m)^{-1}=\frac{N(G_{m-1})}{N(G_m)}\times \frac{N(G_{m-2})}{N(G_{m-1})}\times\cdots\times\frac{N(G_{n+1})}{N(G_{n+2})}\times\frac{N(G_n)}{N(G_{n+1})}.
\end{equation}
Let $X_i$ be the random variable resulting from the following trial: select, uniformly at random, a basis $R$ from $B(G_i)$ and set
$$
X_i=\begin{cases}1,&\text{if $e_i\notin R$;}\\0,&\text{otherwise}.\end{cases}
$$
Here we use \Cref{alg:BLERW} to generate the uniform random basis $R$.
For different $i$, we use fresh random sources so that all $X_i$ are mutually independent.
Note that $\mu_i=\Ex X_i=N(G_{i-1})/N(G_i)$, so that 
$$N(G)^{-1}=\Ex(X_mX_{m-1}\ldots X_{n+2}X_{n+1})=\mu_m\mu_{m-1}\ldots\mu_{n+2}\mu_{n+1}.$$ 

Now let $\Xbar_i$ be obtained by taking the mean of $t$ independent copies of the random variable~$X_i$.  
Since $\Ex\Xbar_i=\mu_i$, we have $N(G)^{-1}=\Ex Z$, where $Z=\Xbar_m\Xbar_{m-2}\cdots\Xbar_{n+1}$.  
Also, $\Var\Xbar_i=t^{-1}\Var X_i$, so if $t$ is large enough the variance
of~$Z$ will be small, and $Z^{-1}$ will be a good estimate for $N(G)$.
For this approach to yield a polynomial-time algorithm, we need that all the fractions appearing in the product~\eqref{eq:telescope} are not too small.  
In fact we will show that they are all bounded below by $1/2n$, which is sufficient. 

For the moment, assume this claim, i.e., that $1/2n\leq\mu_i\leq1$, for all $n<i\leq m$.
Note that $\Ex X_i^2=\Ex X_i=\mu_i$ since $X_i$ is a 0,1-variable.  
Standard manipulations give 
$$\Ex \Xbar_i^2=\Var\Xbar_i+(\Ex\Xbar_i)^2=t^{-1}\Var X_i+\mu_i^2=t^{-1}(\Ex X_i^2-\mu_i^2)+\mu_i^2\leq\mu_i^2\left(1+\frac{1}{t\mu_i}\right),$$
whence
$$
\Ex Z^2=\Ex\Xbar_m^2\ldots\Ex\Xbar_{n+1}^2 \leq \mu_m^2\ldots\mu_{n+1}^2\Big(1+\frac{2n}t\Big)^m=(\Ex Z)^2\Big(1+\frac{2n}t\Big)^m
$$
Setting $t=40nm\eps^{-2}$, we obtain $\Ex Z^2=\exp(\eps^2/20)(\Ex Z)^2\leq (1+\eps^2/16)(\Ex Z)^2$, which implies $\Var Z\leq(\eps^2/16)(\Ex Z)^2$.
Thus, by Chebyshev's inequality,
$$
\Pr\big[|Z-N(G)^{-1}|\leq\tfrac12\eps N(G)^{-1}\big]=\Pr\big[|Z-\Ex Z|\leq\tfrac12\eps\Ex Z\big]\geq\tfrac34.$$
It follows that 
$\Pr\big[|Z^{-1}-N(G)|\leq\eps N(G)\big]\geq\frac34$.  In other words, the algorithm that returns 
the estimate $Z^{-1}$ satisfies the conditions for an FPRAS.

To complete the proof we just need to bound the ratio $\mu_i=N(G_{i-1})/N(G_i)$.  
Let $R$ be a basis of $B(G_i)$ that contains the edge $e_i$, i.e., that is not a basis of $B(G_{i-1})$.  
Let $R_0$ be the unique basis in $B(G_n)$ and note that $e_i\notin R_0$.
Since $R_0$ is also a basis of $B(G_i)$, the exchange axiom for matroids asserts that there is an edge $f\in R_0\setminus R$ such that $R+f-e_i$ is a basis of $B(G_i)$ and hence of $B(G_{i-1})$.  
This exchange operation associates a basis in $B(G_{i-1})$ with each basis in $B(G_i)$ that is not a basis in $B(G_{i-1})$; furthermore,  every basis in $B(G_{i-1})$ arises at most $|R_0|=n$ times in this way.  It follows that $N(G_i)\leq (n+1)N(G_{i-1})\leq 2nN(G_{i-1})$, as required.

Overall we need $O(nm\eps^{-2})$ samples for $m-n$ estimators each.
For each sample, we use \Cref{alg:BLERW} which has expected run-time $O(n^2)$ by \Cref{thm:clock-popping}, yielding the claimed time complexity.
\end{proof}

\section{Faster approximate counting}
\label{sec:Gibbs}

Similar to \cite{KK17}, let $\Omega$ be the set of configurations consisting of directed edges so that every vertex is the tail of exactly one arc. 
Consider the following Gibbs distribution:
\begin{align}\label{eqn:Gibbs}
  \rho_{\gamma_2,\gamma}(S)\propto\gamma_2^{C_2(S)}\gamma^{C(S)},
\end{align}
where $S\in\Omega$, $\gamma_2,\gamma\ge 0$ are two parameters,
and $C(S)$ (or $C_2(S)$) is the number of cycles of length greater than~2 (or $2$-cycles) present in $(V,S)$.
We adopt the convention that $0^0=1$.
Then \Cref{alg:clock-popping} and \Cref{alg:BLERW} sample from the distribution $\mu_{0,0.5}$.
We define also the corresponding partition function
\begin{align}\label{eqn:Gibbs-PF}
  Z_{\gamma_2,\gamma}(G)=\sum_{S\in\Omega}\gamma_2^{C_2(S)}\gamma^{C(S)}.
\end{align}
Then $Z_{0,0.5}(G)=\abs{\+B}$.
Another interesting special case is $\rho_{1,1}$,
with the corresponding partition function $Z_{1,1}(G)=\prod_{v\in V}\deg(v)$.
This is because $\rho_{1,1}$ corresponds to choosing a random neighbour of $v$ for each $v\in V$.
Note that each cycle longer than $2$ has two potential orientations.

In order to sample from $\rho_{\gamma_2,\gamma}$ in \eqref{eqn:Gibbs} where $\gamma_2,\gamma\in[0,1]$,
we introduce the following variant of \Cref{alg:BLERW}.
Again, this is very similar to the sampling algorithm of Kassel and Kenyon \cite{KK17}.

\begin{algorithm}
  \caption{Sample $\rho_{\gamma_2,\gamma}$ in \eqref{eqn:Gibbs}}
  \label{alg:Gibbs-sampling}
  $V_u\gets V$\;
  $S\gets \emptyset$\;
  \While{$V_u\neq\emptyset$}{
    $v\gets$ an arbitrary vertex in $V_u$\;
    Start a random walk from $v$, and erase any cycle $C$ formed with probability $1-\gamma_2$ if the length is $2$ or with probability $1-\gamma$ otherwise,
    until some vertex in $V\setminus V_u$ is reached, or a cycle $C$ is accepted\;
    Remove all vertices of the walk from $V_u$\;
    Add all arcs along the walk to $S$\;
  }
  \KwRet{$S$}
\end{algorithm}

The correctness of \Cref{alg:Gibbs-sampling} also follows from \cite[Theorem 8]{GJL19} and the argument in \Cref{sec:LERW}.
We introduce an auxiliary variable for each cycle, which is false with probability $1-\gamma_2$ if the cycle has length $2$, 
or with probability $1-\gamma$ if the cycle is longer.
A cycle is ``bad'' if and only if it is present and the auxiliary variable is false.
Although there are exponentially many such auxiliary variables, 
we only reveal them when necessary.
Every time a cycle is popped, the auxiliary variable is reset.
By the same reason as for \Cref{alg:clock-popping}, such an instance is \emph{extremal} and the correctness follows.

Since the extremal condition holds, the run-time of \Cref{alg:Gibbs-sampling} can be analysed analogously to that of \Cref{alg:clock-popping} and \Cref{alg:BLERW}.
Let $T$ be the number of resampled variables of \Cref{alg:Gibbs-sampling}.
We apply \eqref{eqn:expected}.
First consider the case of $\gamma_2=0$. 
To bound $\sum_{e\in E}\frac{q_e\abs{e}}{q_{\emptyset}}$, we use the injective mapping in Lemma~\ref{lem:magic-ratio-edge},
and to bound $\sum_{C\text{ is a cycle}}\frac{q_C\abs{C}}{q_{\emptyset}}$,
we use an injective mapping similar to the one in Lemma~\ref{lem:magic-ratio-cycle} by simply flipping the auxiliary variable.
Observe that both mappings preserve all cycles other than the one repaired,
and as a consequence, preserve all weights up to a factor $\frac{1-\gamma}{\gamma}$ in the latter case.
It implies that when $\gamma_2=0$, the run-time can be bounded as follows:
\begin{align}\label{eqn:gamma_2=0}
  \Ex T \le 2n(n-1) + \frac{1-\gamma}{\gamma}\cdot n.
\end{align}

Otherwise $\gamma_2>0$.
To bound $\sum_{e\in E}\frac{q_e\abs{e}}{q_{\emptyset}}$,
again, we need to resort to the injective mapping in Lemma~\ref{lem:magic-ratio-edge}.
However, now the ``perfect'' configurations and ``one-flaw'' configurations allow more than one $2$-cycles.
Let $\Omega^{(k)}\subset\Omega$ be the set of configurations with $k$ $2$-cycles, for any $k\le n/2$.

\begin{lemma}
  There is an injective mapping $\psi_k:\Omega^{(k)}\rightarrow\Psi^{(k-1)}$ for any $k\ge 1$,
  where $\Psi^{(k-1)}:=\{(\sigma,v,e)\mid \sigma\in\Omega^{(k-1)},~v\in V,~e\in U(\sigma)\}$.
  Moreover, $\psi_k$ preserves all cycles except one with length $2$.
  \label{lem:2-cycles-ratio}
\end{lemma}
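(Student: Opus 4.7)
The plan is to adapt the repairing map from the proof of Lemma~\ref{lem:magic-ratio-edge} to the situation where $\sigma$ may carry several $2$-cycles at once. Fix an arbitrary total order on $E$. For $\sigma \in \Omega^{(k)}$ with $k \geq 1$, I would single out the $2$-cycle sitting on the smallest edge $e = (v_1,v_2)$; its component $S$ in $(V,\sigma)$ is unicyclic with this $2$-cycle as its sole directed cycle, so $U(\sigma)$ restricted to $S$ is a tree.

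I would then run the exact two-case procedure of Lemma~\ref{lem:magic-ratio-edge}. If $S \neq V$, pick the first edge $(u,u')$ crossing from $S$ to $V \setminus S$; if $S = V$ (which forces $k = 1$, since the sole component can carry only one directed cycle), pick the first edge $e' = (u,u') \notin U(\sigma)$, which exists because $m \geq n$. In either case, take the unique path $u = u_1, u_2, \ldots, u_\ell = v_1$ in the tree $U(\sigma)$ restricted to $S$ and redirect each $a_{u_i}$ to point at $u_{i-1}$, with $u_0 := u'$, leaving every other arc of $\sigma$ unchanged. This produces $\sigmafix$, and I set $\psi_k(\sigma) := (\sigmafix, u, e)$. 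The edge $e$ remains in $U(\sigmafix)$, because either $v_2$ is off the path (so $a_{v_2}$ is unchanged) or, by simplicity of the path in the tree, $v_2 = u_{\ell-1}$ and the new $a_{v_1}$ is exactly the arc $(v_1, v_2)$.

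Three things need to be checked, paralleling the argument of Lemma~\ref{lem:magic-ratio-edge}. First, $\sigmafix \in \Omega^{(k-1)}$: the targeted $2$-cycle on $e$ is broken; no new $2$-cycle is created along the rerouted path because it is simple in a tree; and no $2$-cycle is created at the arc $(u,u')$, because the defining condition on $u'$ (different component in Case~1, or $e' \notin U(\sigma)$ in Case~2) gives $a_{u'} \neq u$ in $\sigma$, and the reroute leaves that inequality intact. The remaining $k-1$ $2$-cycles sit in components of $(V,\sigma)$ disjoint from $S$, so they are untouched. Second, injectivity: from $(\sigmafix, u, e)$, read off $u'$ as the head of $a_u$ in $\sigmafix$, distinguish the two cases by testing whether $(u,u')$ is a bridge of $U(\sigmafix)$, and then undo the reroute along the unique path between $u$ and an endpoint of $e$ in the appropriate tree, exactly as in Lemma~\ref{lem:magic-ratio-edge}. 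Third, cycle preservation: every cycle of $\sigma$ other than the targeted $2$-cycle lies in a component of $(V,\sigma)$ that the repair leaves alone, hence appears unchanged in $\sigmafix$.

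The main obstacle, as in Lemma~\ref{lem:magic-ratio-edge}, is ruling out a spurious $2$-cycle appearing on $(u,u')$ in Case~2. The delicate subcase is when $u'$ happens to coincide with some $u_j$ on the rerouted path, so that the reroute itself modifies $a_{u'}$ to $u_{j-1}$; the condition $a_{u'} \neq u$ then translates into $u_{j-1} \neq u_1$, which if violated would force $j = 2$ and make $(u,u') = (u_1,u_2)$ an edge of the tree $U(\sigma)$ restricted to $S$, contradicting the choice $e' \notin U(\sigma)$.
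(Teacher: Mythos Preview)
Your proposal is correct and follows essentially the same approach as the paper, which simply says to choose an arbitrary $2$-cycle and apply the ``fix'' from Lemma~\ref{lem:magic-ratio-edge}; you have carefully spelled out the details that the paper leaves implicit, including the verification that $e\in U(\sigmafix)$, that no spurious $2$-cycle appears at $(u,u')$, and that the remaining cycles lie in untouched components.
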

\begin{proof}
  The mapping is similar to the one in Lemma~\ref{lem:magic-ratio-edge}.
  Given $S\in\Omega^{(k)}$,
  we choose an arbitrary 2-cycle,
  and apply the ``fix'' of the injective mapping in Lemma~\ref{lem:magic-ratio-edge}.
  It is straightforward to check that this operation will only destroy the chosen $2$-cycle, 
  and it is reversible given the auxiliary information.
\end{proof}

A consequence of Lemma~\ref{lem:2-cycles-ratio} is that $\sum_{e\in E}\frac{q_e\abs{e}}{q_{\emptyset}}\le 2n^2$,
similar to Lemma~\ref{lem:magic-ratio-edge}.
This is because for any configuration with exactly one ``bad'' $2$-cycle,
applying the mapping in Lemma~\ref{lem:2-cycles-ratio} yields a configuration without the presence of the bad $2$-cycle,
and other cycle structures are all preserved.
The overhead is to remember one vertex and one edge from the undirected version of the image.

To bound $\sum_{C\text{ is a cycle}}\frac{q_C\abs{C}}{q_{\emptyset}}$,
consider again the injective mapping similar to the one in Lemma~\ref{lem:magic-ratio-cycle},
by simply flipping the value of the auxiliary variable.
It implies that $\sum_{C\text{ is a cycle}}\frac{q_C\abs{C}}{q_{\emptyset}}\le \frac{1-\gamma}{\gamma}\cdot n$.
Thus we have that for $\gamma_2>0$, the following also holds:
\begin{align}\label{eqn:gamma_2>0}
  \Ex T \le 2n^2 + \frac{1-\gamma}{\gamma}\cdot n.
\end{align}

Combining the two cases \eqref{eqn:gamma_2=0} and \eqref{eqn:gamma_2>0}, we have the following corollary.

\begin{corollary}\label{cor:Gibbs-RT}
  The run-time of \Cref{alg:Gibbs-sampling}, in expectation, is $O(n^2)$ if $\gamma_2\in[0,1]$ and $\gamma\in[1/2,1]$.
\end{corollary}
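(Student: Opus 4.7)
The plan is to combine the two bounds \eqref{eqn:gamma_2=0} and \eqref{eqn:gamma_2>0} already established in the discussion above and observe that both reduce to $O(n^2)$ under the stated parameter range. The analysis has already partitioned the problem according to whether $\gamma_2=0$ or $\gamma_2>0$, and the bounds derived via the injective mappings of Lemmas~\ref{lem:magic-ratio-edge}, \ref{lem:magic-ratio-cycle}, and \ref{lem:2-cycles-ratio} give the corresponding inequalities on $\Ex T$ in each case. So the corollary is really a bookkeeping step, and the main obstacle is simply checking that each summand is $O(n^2)$ under the hypothesis $\gamma_2\in[0,1]$, $\gamma\in[1/2,1]$.

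First I would dispose of the ``cycle'' term $\frac{1-\gamma}{\gamma}\cdot n$, which appears in both bounds. The function $\gamma\mapsto (1-\gamma)/\gamma$ is decreasing on $(0,1]$, so for $\gamma\in[1/2,1]$ we have $(1-\gamma)/\gamma\le 1$, and therefore this contribution is bounded by $n=O(n^2)$. Next, the ``edge'' terms $2n(n-1)$ and $2n^2$ appearing respectively in the $\gamma_2=0$ and $\gamma_2>0$ cases are manifestly $O(n^2)$. Since $\{0\}\cup(0,1]=[0,1]$ covers the full range of $\gamma_2$, one of the two bounds always applies, and in either case we obtain $\Ex T\le 2n^2+n=O(n^2)$.

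Finally, I would invoke the reasoning of \Cref{sec:LERW}: the auxiliary-variable formulation used to justify the correctness of \Cref{alg:Gibbs-sampling} yields an extremal partial rejection sampling instance, so Eriksson's strong convergence property guarantees that the number of resampled variables is independent of the order of resamplings, and in particular the sequential random-walk implementation has actual run-time of the same order as $\Ex T$. Combining this with the $O(n^2)$ bound on $\Ex T$ established in the previous paragraph completes the proof. Since every step is either a monotonicity check or a direct appeal to a previously proved bound, I do not anticipate any real obstacle; the only subtlety worth emphasising is making explicit that the dependence of the ``cycle'' overhead on $\gamma$ is tame precisely on the interval $[1/2,1]$, which is where the hypothesis on $\gamma$ is used.
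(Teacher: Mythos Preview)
Your proposal is correct and matches the paper's approach exactly: the paper simply states that the corollary follows by combining \eqref{eqn:gamma_2=0} and \eqref{eqn:gamma_2>0}, and you have spelled out the routine verification that $(1-\gamma)/\gamma\le 1$ on $[1/2,1]$ makes both bounds $O(n^2)$.
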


The Gibbs formulation \eqref{eqn:Gibbs} allows us to utilise faster annealing algorithms to reduce approximate counting to sampling.
See \cite{SVV09, Hub15}, and the current best algorithm is due to Kolmogorov \cite{Kol18}.

In fact, we will need a slight generalisation from \cite{GH18} as follows.
Let $\Omega$ be a finite set, and the generalised Gibbs distribution $\rho_{\beta}(\cdot)$ over $\Omega$ takes the following form:
\begin{align}\label{eqn:rho}
  \rho_{\beta}(X) = \frac{1}{Z(\beta)}\exp(-\beta H(X))\cdot F(X),
\end{align}
where $\beta$ is the \emph{temperature}, $H(X)\ge0$ is an integer function called the \emph{Hamiltonian},
$F:\Omega\rightarrow\mathbb{R}^+$ is a non-negative function,
and, with a little abuse of notation, $Z(\beta)=\sum_{X\in\Omega}\exp(-\beta H(X))\cdot F(X)$ is the normalising factor.
We would like to turn the sampling algorithm into an approximation algorithm to $Z(\beta)$.
Typically, this involves calling the sampling oracle in a range of temperatures, which we denote $[\beta_{\minn},\beta_{\maxx}]$.
(This process is usually called simulated annealing.)
Let $Q:=\frac{Z(\beta_{\minn})}{Z(\beta_{\maxx})}$, $q=\log Q$, and $N=\max_{X\in\Omega} H(X)$.
The following result is due to Kolmogorov \cite[Theorem 8]{Kol18},
as extended in \cite[Lemma 8]{GH18}.

\begin{proposition}  \label{prop:Kolmogorov}
  Suppose we have a sampling oracle from the distribution $\rho_{\beta}$ for any $\beta\in[\beta_{\minn},\beta_{\maxx}]$.
  There is an algorithm to approximate $Q$ within $1\pm\eps$ multiplicative errors
  using $O(q\log N/\eps^{2})$ oracle calls in average.
\end{proposition}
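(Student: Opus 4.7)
The plan is to verify that Proposition \ref{prop:Kolmogorov} follows by black-box invocation of Kolmogorov's simulated annealing algorithm \cite[Theorem 8]{Kol18}, together with the mild generalisation recorded in \cite[Lemma 8]{GH18} that handles the auxiliary weight function $F(X)$ in \eqref{eqn:rho}. So the task reduces to (i) recalling the ingredients of the annealing framework in a form that accommodates $F$, and (ii) pointing out that nothing in Kolmogorov's analysis is sensitive to its presence.

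The underlying mechanism is a telescoping cooling schedule. Introduce inverse temperatures $\beta_{\minn}=\beta_0<\beta_1<\cdots<\beta_k=\beta_{\maxx}$ and factor
\[
  Q=\frac{Z(\beta_{\minn})}{Z(\beta_{\maxx})}=\prod_{i=0}^{k-1}\frac{Z(\beta_i)}{Z(\beta_{i+1})}.
\]
Each factor admits an unbiased estimator via importance sampling: drawing $X\sim\rho_{\beta_{i+1}}$ and evaluating $W_i(X):=\exp\bigl(-(\beta_i-\beta_{i+1})H(X)\bigr)$ yields $\Ex W_i = Z(\beta_i)/Z(\beta_{i+1})$, because $F(X)$ cancels identically in numerator and denominator of the resulting ratio. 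This is the sole place where the generalisation from the standard Gibbs setup is required, and it is immediate.

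Next I would invoke Kolmogorov's adaptive schedule, which chooses the $\beta_i$ online so as to maintain the invariant that the squared coefficient of variation of each $W_i$ is bounded by a fixed constant. A by-now-standard analysis \cite{SVV09,Hub15,Kol18} shows this invariant can be maintained with $k=O(\sqrt{q\log N})$ stages, after which a Chebyshev-plus-union-bound argument across stages converts the per-stage sample sizes into a total expected oracle budget of $O(q\log N/\eps^2)$ to achieve overall multiplicative accuracy $\eps$ on the product.

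The only genuine obstacle is confirming that the presence of $F(X)$ does not disturb this analysis. Since $F$ factors identically into both $Z(\beta)$ and $\rho_\beta$ at every temperature, the relative variance of each $W_i$ depends only on the Hamiltonian differences $\beta_i-\beta_{i+1}$ and on $N=\max_{X\in\Omega} H(X)$, exactly as in the unweighted Gibbs case. Consequently the proof of \cite[Theorem 8]{Kol18} transfers verbatim once this observation is in place, which is precisely what \cite[Lemma 8]{GH18} records, yielding the stated bound.
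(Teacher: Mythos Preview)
Your proposal is correct and matches the paper's treatment: the paper does not prove this proposition at all but simply attributes it to Kolmogorov~\cite[Theorem~8]{Kol18} together with the extension in~\cite[Lemma~8]{GH18}, which is precisely the black-box invocation you describe. Your additional exposition of the telescoping schedule, the unbiased estimator $W_i$, and the observation that $F$ plays no role in the relative-variance analysis is accurate and is exactly the content of the cited extension.
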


\begin{theorem}\label{thm:faster-counting}
  There is an FPRAS for counting bases of a bicircular matroid, with time complexity $O(n^3(\log n)^2\eps^{-2})$.
\end{theorem}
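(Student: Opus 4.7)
The plan is to cast counting as estimating a Gibbs partition function and to invoke Kolmogorov's annealing framework (\Cref{prop:Kolmogorov}), using \Cref{alg:Gibbs-sampling} as the sampling oracle. The starting point is the identity
\begin{equation*}
  \abs{\+B} = Z_{0,1/2}(G) = \prod_{v\in V}\deg(v)\cdot\frac{Z_{0,1}(G)}{Z_{1,1}(G)}\cdot\frac{Z_{0,1/2}(G)}{Z_{0,1}(G)},
\end{equation*}
in which the prefactor $Z_{1,1}(G)=\prod_v\deg(v)$ is computed exactly in $O(n)$ time. Each of the two ratios will be estimated to within $1\pm\eps/3$ by a separate annealing, and propagating the relative errors through the product gives an overall $(1\pm\eps)$-approximation.

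I would split the annealing into two stages, each written in the Gibbs form \eqref{eqn:rho}. In \emph{Stage A}, fix $\gamma=1$ and anneal $\gamma_2$ from $1$ down to some small $\gamma_2^*>0$, taking $H(X)=C_2(X)$, $F(X)\equiv 1$, and $\beta=-\log\gamma_2\in[0,\beta_{\max}^A]$. In \emph{Stage B}, fix $\gamma_2=0$ and anneal $\gamma$ from $1$ down to $1/2$, taking $H(X)=C(X)$, $F(X)$ equal to $1$ if $C_2(X)=0$ and $0$ otherwise, and $\beta=-\log\gamma\in[0,\log 2]$. Throughout, the parameters stay inside $\gamma_2\in[0,1]$ and $\gamma\in[1/2,1]$, so by \Cref{cor:Gibbs-RT} each oracle call to \Cref{alg:Gibbs-sampling} has expected cost $O(n^2)$.

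The bounds needed for \Cref{prop:Kolmogorov} are easy to verify. For Stage A, $Q_A\le Z_{1,1}/Z_{0,1}\le\prod_v\deg(v)\le n^n$ (using $Z_{0,1}\ge\abs{\+B}\ge 1$), so $q_A=O(n\log n)$; also $N_A=\max_X C_2(X)\le n/2$, so $\log N_A=O(\log n)$. For Stage B, each of the at most $n/3$ long cycles contributes a factor of at most $2$ when passing from $\gamma=1/2$ to $\gamma=1$, so $Q_B\le 2^n$ and $q_B=O(n)$, while $\log N_B=O(\log n)$. Summing, \Cref{prop:Kolmogorov} uses $O((q_A+q_B)\log N\cdot\eps^{-2})=O(n\log^2 n\cdot\eps^{-2})$ oracle calls in total, each of expected cost $O(n^2)$, yielding overall expected runtime $O(n^3\log^2 n\cdot\eps^{-2})$.

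The main technical subtlety is that Stage A can only approach $\gamma_2=0$ by taking $\beta_{\max}^A$ finite, so the algorithm really estimates $Z_{\gamma_2^*,1}/Z_{1,1}$ with $\gamma_2^*=e^{-\beta_{\max}^A}>0$. Since $0\le Z_{\gamma_2^*,1}-Z_{0,1}\le\gamma_2^*\abs{\Omega}\le\gamma_2^*n^n$, choosing $\beta_{\max}^A=O(n\log n+\log\eps^{-1})$ brings this gap below $(\eps/3)Z_{0,1}$ (recalling $Z_{0,1}\ge 1$), so $Z_{\gamma_2^*,1}$ approximates $Z_{0,1}$ within $1\pm\eps/3$ as required. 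This choice keeps $q_A=O(n\log n)$, so the overall complexity stands.
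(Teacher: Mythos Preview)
Your proof is correct and takes essentially the same two-stage annealing approach as the paper: anneal $\gamma_2$ from $1$ toward $0$ with $\gamma=1$ fixed, then anneal $\gamma$ from $1$ to $1/2$ with $\gamma_2=0$ fixed, using \Cref{alg:Gibbs-sampling} as the oracle and \Cref{cor:Gibbs-RT} for its cost. The only differences are cosmetic: you bound $Q_A$ via the crude estimate $Z_{1,1}=\prod_v\deg(v)\le n^n$ and $Z_{0,1}\ge\abs{\+B}\ge 1$, whereas the paper obtains $Q_A\le(mn)^n$ through repeated application of \Cref{lem:2-cycles-ratio} (both give $q_A=O(n\log n)$); and you explicitly handle the fact that $\gamma_2=0$ corresponds to $\beta=\infty$ by stopping at a tiny $\gamma_2^*$, a detail the paper leaves implicit.
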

\begin{proof}
  We will do a two stage annealing.
  Our starting point is $Z_{1,1}(G)=\prod_{v\in V}\deg(v)$.
  We first apply the annealing algorithm in Proposition~\ref{prop:Kolmogorov} between $\gamma_2=1$ and $\gamma_2=0$.
  We only treat $\gamma_2$ as the temperature in this stage but not $\gamma$.\footnote{This explains why we need $F(X)$ in \eqref{eqn:rho}.}
  Clearly $H(S)=C_2(S)\le m$ in this case, and $\log N=O(\log n)$.
  The more complicated estimate is $Q=\frac{Z_{1,1}(G)}{Z_{0,1}(G)}$.
  For any $S\in\Omega$, we apply the mapping in Lemma~\ref{lem:2-cycles-ratio} at most $C_2(S)\le n$ times to get $S'$ where no $2$-cycle is present.
  Given $S'$, and a sequence of vertex-edge pair generated by this repairing sequence,
  we may uniquely recover $S$ since the mapping in Lemma~\ref{lem:2-cycles-ratio} is injective.
  Moreover, the longer cycles in $S$ are all preserved in $S'$.
  It implies that $Q=\frac{Z_{1,1}(G)}{Z_{0,1}(G)}\le (mn)^n$ and $q=\log Q=O(n\log n)$.
  Hence, the number of samples required in this step is $O(n(\log n)^2\eps^{-2})$.  

  In the second stage, we apply Proposition~\ref{prop:Kolmogorov} between $\gamma=1$ and $\gamma=0.5$, while $\gamma_2=0$ is fixed.
  Then $H(S)=C(S)\le n$ in this case, and $\log N=O(\log n)$.
  Moreover $Q=\frac{Z_{0,1}(G)}{Z_{0,0.5}(G)}\le 2^{\max_{S}C(S)}\le 2^n$.
  Thus $q=\log Q=O(n)$.
  The number of samples required in this step is $O(n\log n\eps^{-2})$.

  Overall, the number of samples required is $O(n(\log n)^2\eps^{-2})$.
  Together with Corollary~\ref{cor:Gibbs-RT}, this implies the claimed run-time.
\end{proof}

\section*{Acknowledgements}

We thank Adrien Kassel for bringing reference \cite{KK17} to our attention.
We also thank Kun He for pointing out an improvement in Lemma~\ref{lem:magic-ratio-edge}.
Part of the work was done while HG was visiting the Institute of Theoretical Computer Science, Shanghai University of Finance and Economics,
and he would like to thank their hospitality.

\bibliographystyle{alpha}
\bibliography{PRS}

\label{lastpage}

\end{document}